\newcommand\floatc@myruled[2]{{\@fs@cfont #1} #2\par}
\newcommand\fs@myruled{\def\@fs@cfont{\bfseries}\let\@fs@capt\floatc@myruled
\def\@fs@pre{\hrule height.8pt depth0pt \kern2pt}%
\def\@fs@post{\kern2pt\vspace{-10pt}\hrule\relax}%
\def\@fs@mid{\kern2pt\hrule\kern2pt}%
\let\@fs@iftopcapt\iftrue}
\newtheorem{theorem}{Theorem}
\newtheorem{lemma}{Lemma}
\title{\LARGE \bf
Relocation in Car Sharing Systems with Shared Stackable Vehicles: Modelling Challenges and Outlook
}
\author{Chiara Boldrini, Riccardo Incaini, Raffaele Bruno \\
IIT-CNR \\
Via G. Moruzzi 1, 56124, Pisa, ITALY \\
{\tt\small \{first.last\}@iit.cnr.it}
\thanks{*This work was partially funded by the ESPRIT project. This project has received funding from the \emph{European Union's Horizon 2020 research and innovation programme} under grant agreement No 653395. This work was also partially funded by the REPLICATE project. This project has received funding from the \emph{European Union's Horizon 2020 research and innovation programme} under grant agreement No 691735.}
}
\begin{document}

\maketitle
\thispagestyle{empty}
\pagestyle{empty}

\begin{abstract}
Car sharing is expected to reduce traffic congestion and pollution in cities while at the same time improving accessibility to public transport. However, the most popular form of car sharing, one-way car sharing, still suffers from the vehicle unbalance problem. Innovative solutions to this issue rely on custom vehicles with stackable capabilities: customers or operators can drive a train of vehicles if necessary, thus efficiently bringing several cars from an area with few requests to an area with many requests. However, how to model a car sharing system with stackable vehicles is an open problem in the related literature. In this paper, we propose a queueing theoretical model to fill this gap, and we use this model to derive an upper-bound on user-based relocation capabilities. We also validate, for the first time in the related literature, legacy queueing theoretical models against a trace of real car sharing data. Finally, we present preliminary results about the impact, on car availability, of simple user-based relocation heuristics with stackable vehicles. Our results indicate that user-based relocation schemes that exploit vehicle stackability can significantly improve car availability at stations. 
\end{abstract}

%
%
\section{Introduction}
\label{sec:intro}
\noindent
Car sharing is considered one of the pillars of the smart mobility infrastructure for smart cities. With car sharing, car access is decoupled from car ownership: people do not own a car, they simply rent it from the car sharing operator when they need it (typically for short-range trips), effectively implementing the concept of Mobility-as-a-Service. In cities where car sharing services are running, positive effects have already been measured: car sharing members use cars less, rely more on public transport or bicycles, and in some cases they even shed their private car (or refrain from buying a second one for their family)~\cite{martin2016impacts}. Car sharing can also act as a last-kilometre solution for connecting people with public transport hubs, hence becoming a feeder to traditional public transit~\cite{shaheen2001commuter}.

One-way car sharing, in which customers are not forced to return the vehicle at the starting point of their journey, is the most popular among customers, due to the great flexibility it provides. One-way systems can be also classified into \emph{free-floating} or \emph{station-based} according to their parking restrictions. In fact, the former refers to a system in which the return of the rented vehicle is possible at any parking spot within the operational area of the car sharing service\footnote{For example, \url{https://www.car2go.com/}.}, while the latter requires that pickups or drop-offs of vehicles occur at designed parking stations deployed by the operator\footnote{For example, \url{https://www.autolib.eu/en/}.}. Advantages of station-based systems is that they ensure higher reliability and predictability of car locations and parking, which make possible advance reservations from the users. One-way car sharing is not without drawbacks for the car sharing operators.  With one-way car sharing, cars will follow the natural flows of people in a city, hence accumulating in commercial/business areas in the morning and in residential areas at night~\cite{boldrini16characterising}. As a result, the availability of cars can become extremely unbalanced during the day, and certain areas may end up being underserved due to lack of available cars. 

Previous research has proposed several approaches to solve the vehicle unbalance problem, including: user-based relocation, i.e., price incentives for the users to relocate the vehicles themselves~\cite{trr12_relocation}; operator-based relocation, i.e., workforce that moves vehicles from where they are not needed to where there is a significant demand~\cite{Kek2009,Boyaci2015}; and optimal planning of station deployment to achieve better service accessibility and a more favourable distribution of vehicles~\cite{itcs16_biondi}. It is important to point out that \emph{the relocation process is intrinsically inefficient}: as one driver per car is needed, to relocate several cars a large workforce or many willing customers are necessary. This significantly complicates the relocation with respect to, e.g., bike sharing services, where a single worker with a van can redistribute a large amount of bicycles. 

In order to address the above limitations, it is fundamental to reduce the ratio between the workforce size and the number of vehicles that can be relocated. Recently, innovative technologies have been proposed to enable more efficient vehicle rebalancing in CS systems. On the one hand, the rebalancing problem is solved in~\cite{ijrr12_fluid} using empty robotic vehicles autonomously driving between stations. On the other hand, new vehicle concepts with \emph{stackable capabilities} have been recently released or are under development, which can be stacked into a train (through a mechanical and electric coupling) and/or folded together. Then, the train can be driven either by a car sharing worker (up to 8 vehicles) or by a customer (up to 2 vehicles). An illustration of this type of vehicle prototyped in the ESPRIT project~\cite{esprit} is provided in Fig.~\ref{fig:esprit}. Such stackable cars come with the promises of significantly improving the system manageability of future car sharing services. However, the evaluation and design of new car sharing services using these innovative stackable cars call for new modelling techniques able to accurately characterise their peculiar properties. 
\begin{figure}[th]
\begin{center}
\includegraphics[scale=0.4]{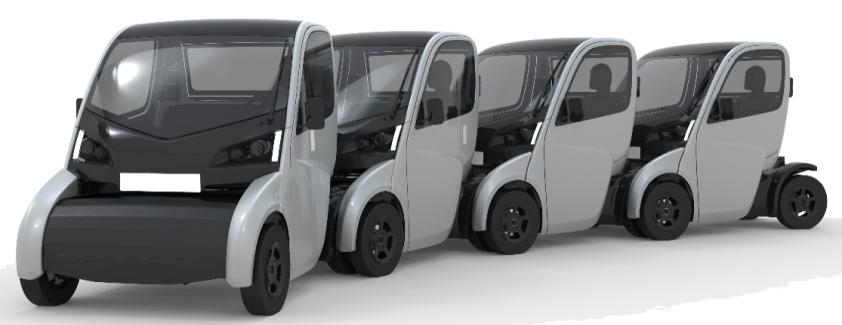}
\caption{The ESPRIT train of vehicles}
\label{fig:esprit}
\end{center}\vspace{-15pt}
\end{figure}

Various models for assessing the performance of one-way car sharing systems have been proposed in the literature. A class of modelling approaches relies on \emph{time-space} models that describe the interactions between the operational decisions (i.e., movement of staff, relocation activities) and the number of vehicles at the station~\cite{Kek2009,Boyaci2015}. The main drawbacks of this approach is the explosion in size of the state space, and the limited ability to deal with uncertain conditions due to the stochastic nature of customer arrivals. Thus, stochastic models have been recently proposed based on queueing theoretical approaches~\cite{or11_queue,rr16_queue} or fluid approximations~\cite{ijrr12_fluid}. However, how to model car sharing systems with stackable vehicles is still an open challenge. 

Within this framework, the contributions of this paper can be summarised as follows:
\begin{list}{\tiny$\bullet$}{\leftmargin=1em \itemindent=-0.5em}
\item We validate a queueing theoretical model of one-way car sharing systems that was proposed in prior work~\cite{or11_queue} against a trace of real car sharing operational data, confirming the effectiveness of this modelling approach;
\item We show, with a real-case study, that increasing the fleet size cannot solve the problem of vehicle shortages in hot spots because vehicle availability depends mostly on traffic patterns in the car sharing system;
\item We present preliminary results about two simple heuristics for user-based relocation with stackable cars, showing that in some cases user-based relocation can increase the car availability at stations up to 200\%;
\item We develop a new queuing model to study the evolution of vehicle redistribution in a car sharing station under general user-based relocation policies for stackable cars and we derive an upper-bound on the relocation flow per station.
\end{list}
\section{Preliminaries on Queueing Network Models for CS Systems}\label{sec:queueing_model}
\noindent
In this section, we recall the queueing theoretical model proposed in the literature by~\cite{or11_queue}. To this aim, we assume that the CS operational area is composed of a set $\mathcal{S}$ of non-overlapping car sharing \emph{service centres}. A service centre abstracts what, concretely, can be a CS station (in station-based car sharing) or a zone in a free-floating car sharing operational area. Hence, the queueing theoretical model can be applied to both types of one-way CS systems. Without ambiguity, in the following we will use the terms ``service centre'' and ``station'' interchangeably. At each service centre, shared cars are dropped off at the end of a journey and are picked up by other customers starting their new journeys. We assume that the inter-arrival time between customers that pick up cars at station $i$ is exponentially distributed with rate~$\mu_{i}$ and that, similarly to~\cite{or11_queue}, customers simply leave the service centre if they cannot find an available vehicle. 
In order to keep the analytical model tractable, in this work, similarly to the related literature, we assume that the capacity of service centres is not particularly critical, hence we neglect potential losses due to a service centre being fully occupied at the time a newly dropped-off car arrives.
Based on the above assumptions, the service centre can be represented, in Kendall's notation, as a -/M/1 queue~\cite{bolch2006queueing}. 

Individual queues representing the CS service centres are then networked together to reflect the CS network dynamics. To this aim, the probability matrix $\mathbf{P}= \{ p_{ij} \}_{i,j \in \mathcal{S}}$ is introduced, whereby $p_{ij}$ denotes the probability that a customer leaving service centre $i$ with a car will head for service centre~$j$. In addition, in order to model the fact that it takes a certain amount of time to go from service centre~$i$ to service centre~$j$, we introduce, as in~\cite{or11_queue}, delay queues in the model. These delay queues are modelled as infinite-servers queues, and we denote their set with $\mathcal{I}$. There will be a delay queue between service centres $i$ and $j$ if $p_{ij} \neq 0$, thus  $| \mathcal{I}| \leq |\mathcal{S} \times \mathcal{S}|$. Each server in an infinite-servers queue keeps a car for an exponential amount of time with mean $T_{ij}$ (where $T_{ij}$ is the expected travel time from $i$ to $j$)\footnote{The model does not consider traffic congestion, thus each vehicle travels (i.e. is served) independently and in parallel with the others. This implies that the overall service rate of the delay queue is proportional to the number of vehicles in the queue, as stated in Equation~\ref{eq:service_rates}.}. 

We can then summarise the characteristics of the CS queueing network as follows. We denote the number of shared cars in the CS system with $N$ and we let $\mathcal{K} = \mathcal{S} \cup \mathcal{I}$.  The service rate of each queue $i \in \mathcal{K}$ is given by:
\begin{equation}\label{eq:service_rates}
\mu_{i}(n_i) = \begin{cases} \mu_{i} & \textrm{if } i \!\in\! \mathcal {S} \\
\frac{n_i}{T_{jk}} & \textrm{if } i \in \mathcal{I}, j=o(i), k=d(i) 
\end{cases}
\end{equation}
where $n_i\in \{0,1,\ldots, L\}$ denotes the number of vehicles at node~$i$ and, for each $i \in \mathcal{I}$ $o(i)$ and $d(i)$ denote the upstream and downstream service centres of delay queue $i$ (corresponding to the origin and destination service centres of the CS trip). The routing matrix, which describe how vehicles move across queues, can then be obtained as follows:
\begin{equation}
r_{ij} = \begin{cases} p_{il} & \textrm{for } i\in \mathcal{S}, j\in \mathcal{I}, i=o(j), l=d(j),  \\
1 & \textrm{for } i\in \mathcal{I}, j\in \mathcal{S}, j=d(i), \\
0 & \textrm{otherwise} 
\end{cases} \; . 
\end{equation}
The queueing network described above belongs to the category of single-class closed queueing networks, in particular it is a BCMP network~\cite{bolch2006queueing}. 
For the sake of illustration, we provide a simple example in Fig.~\ref{fig:queueing_network}.

\begin{figure}[tb] 
\centering
\includegraphics[trim={0cm 0cm 0cm 0cm},clip,angle=0,width=0.45\textwidth]{./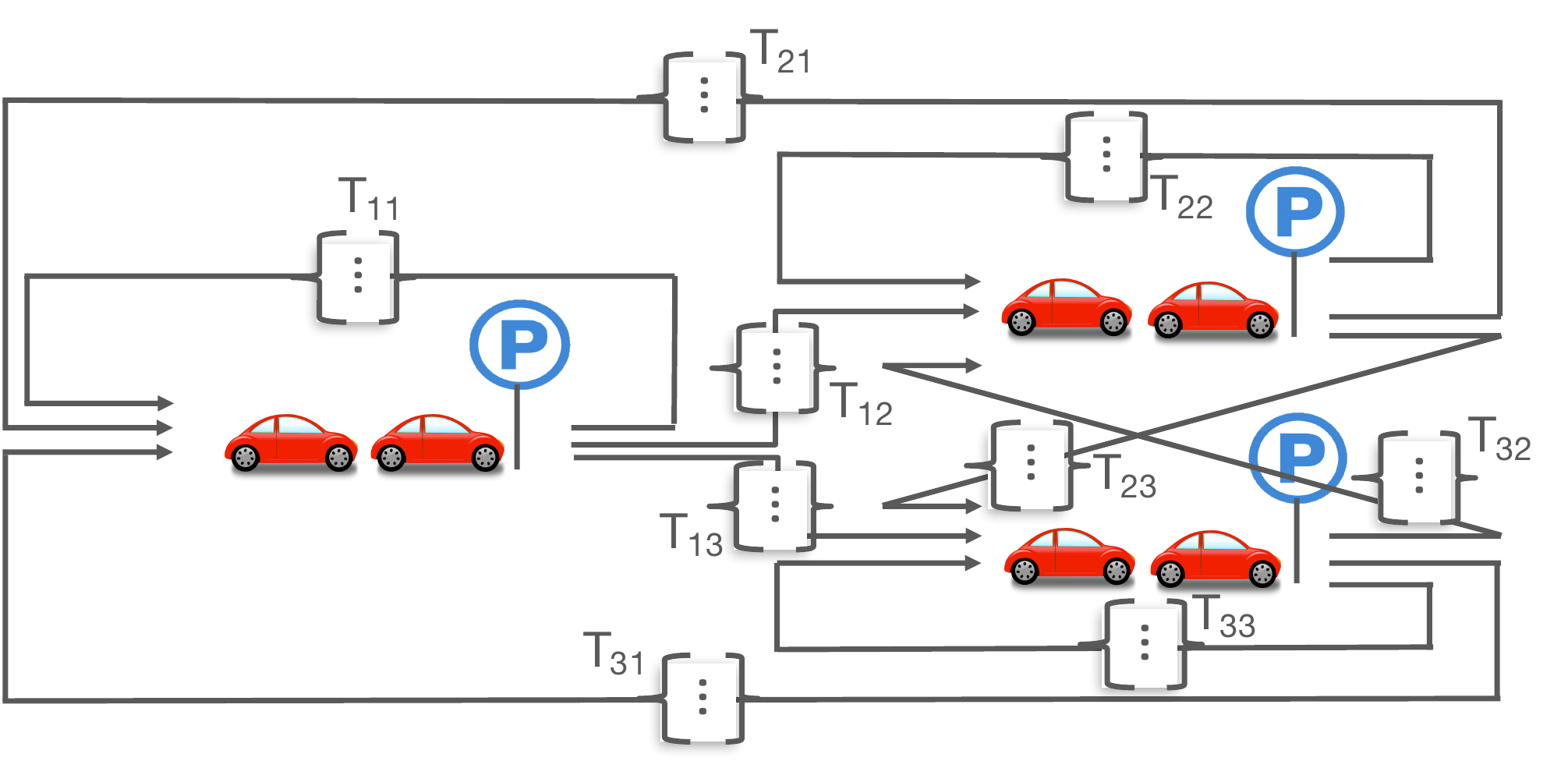}\vspace{-0.3cm}
\caption{Model of a CS system with 3 stations.\label{fig:model}}
\label{fig:queueing_network}
\vspace{-0.5cm}
\end{figure}

The first step in solving this model consists in solving the traffic equations~\cite{bolch2006queueing}, which in our case simplify to the following:
\begin{equation}
\begin{cases}
e_i=\sum_{j \in S} e_{j} p_{ij} & \forall i \in \mathcal{S} \\
e_{ij}=e_i p_{ij} & \forall i,j \in \mathcal{S} \\
\end{cases},
\end{equation}
where $e_i$ denotes the relative arrival rate at queues corresponding to CS stations and $e_{ij}$ the relative arrival rate to the delay queue linking station $i$ and station $j$. 
%
We can now exploit the results for BCMP networks, which are known to have a product form solution for the stationary distribution as follows:
\begin{equation}
P\left(\{n_i\}_{i \in \mathcal{S}}, \{n_j\}_{i \in \mathcal{I}} \right) = \frac{1}{G(N)} \prod_{i \in \mathcal{S}} \left(\frac{e_i}{\mu_i}\right)^{n_i} \!\! \prod_{j \in \mathcal{I}} \left(\frac{e_j}{\mu_j}\right)^{n_j} \!\! \frac{1}{n_i !},
\end{equation}
where $G(N)$ is a normalisation constant. $G(N)$ can be efficiently derived as described in~\cite{bolch2006queueing} using the convolution algorithm. 
Important performance measures can then be obtained exploiting the normalisation constant as follows. The throughputs of both single-server queues (the CS service centres) and infinite-server queues (the delay queues) are given by:
\begin{equation}
\lambda_i = e_i \frac{G(N-1)}{G(N)}, \quad i \in \mathcal{K} \;.
\end{equation}
The throughput at CS service centres corresponds to the intensity of drop-offs. For single server queues, the average number of cars $\overline{N}_i$ parked at the service centres and the utilisation $\rho_i$ can be obtained as:
\begin{equation}
\overline{N}_i = \sum_{n = 1}^{N} \left(\frac{e_i}{\mu_i}\right)^n \frac{G_{i}(N-n)}{G(N)},
\end{equation}
\begin{equation}
\rho_i = \frac{e_i}{\mu_i} \frac{G(N-1)}{G(N)},
\end{equation}
where $G_{i}(N-n)$ is the normalisation constant computed removing queue $i$ and considering $N-n$ jobs. Please note that the utilisation $\rho_i$ is a strategic metric for a car sharing network, since it gives the probability that there is at least one car available for pick up at the station.

%
\section{CS model validation}\label{sec:validation}
\noindent
The model described in the previous section has been already used in~\cite{or11_queue} for fleet sizing and in~\cite{rr16_queue} for deriving a strategy to rebalance vehicles. However, its modelling power has never been validated against real traces. Thus, the model could possibly be inadequate to represent the complexities of real car sharing systems. Hence, before extending it in Section~\ref{sec:relocation_queue} to account for stackable vehicles, we believe it is of paramount importance to first check its validity. To this aim, we rely on a dataset composed of all the pickup and drop-off events of 349 shared vehicles of a free-floating car sharing service operated in The Netherlands. Data is collected every minute for a period of one month and a half between May and June 2015 using openly available APIs. The dataset comprises more than 51,000 trips. Each observation reports the type of the event (pickup or drop-off), the time, the geographical coordinates and the status of the vehicle. No information is available on the trip trajectory. 

Following the service centres strategy discussed in Section~\ref{sec:queueing_model}, we have partitioned the study area into non-overlapping square cells. Each of these cells is then modelled as a -/M/1 queue. To this aim, we need to estimate their service rate $\mu_i$. We use the technique described in~\cite{clarke1957maximum}, whereby the service rate $\mu_i$ for queue $i$ is obtained as $\mu_i = \frac{n_{dep} - n_{init}}{T_{busy}}$, where $n_{dep}$ denotes the number of departures observed at the queue (corresponding to the number of pickups in our terminology), $n_{init}$ is the initial size of the queue (i.e., how many cars were parked at time $t_0$), and $T_{busy}$ is the time the queue has been busy (i.e., with at least one car parked). These quantities can be easily computed from the trace, and their distribution (in log-log scale) is shown in Fig.~\ref{fig:mu_sidelength} for varying cell side length. For smaller cell side lengths, we observe several orders of magnitude of variability in the service rates, owing to the fact that cells are small and there can be very popular ones and very neglected ones. Vice versa, the behaviour is more homogeneous when cells are larger. 

\newcommand\plotextension{pdf}

\begin{figure}[t]
\begin{center}
\includegraphics[scale=0.4]{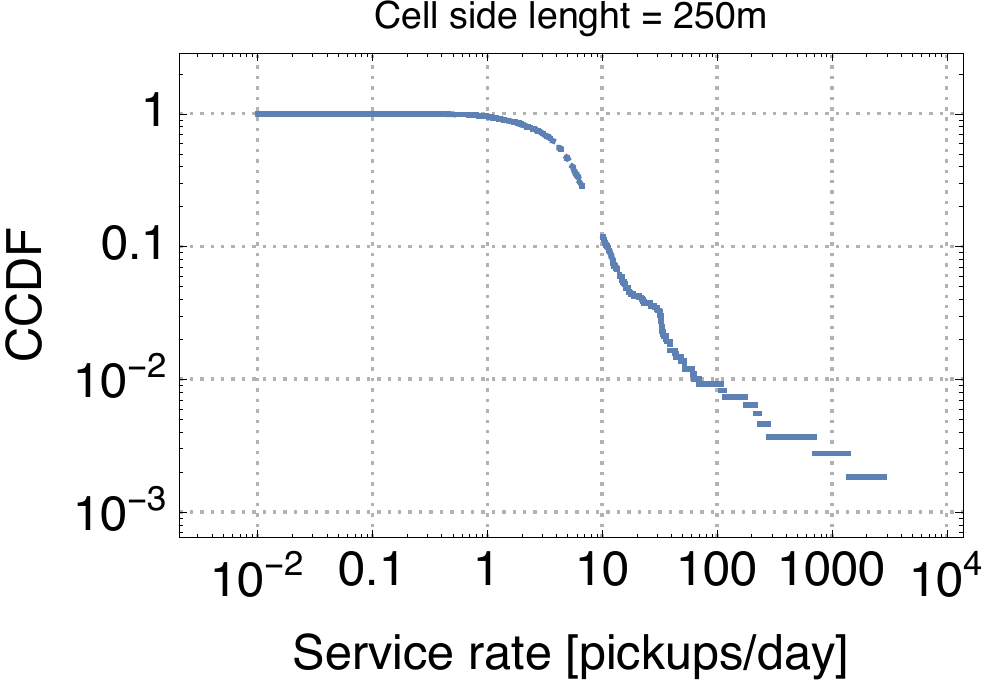}
\includegraphics[scale=0.4]{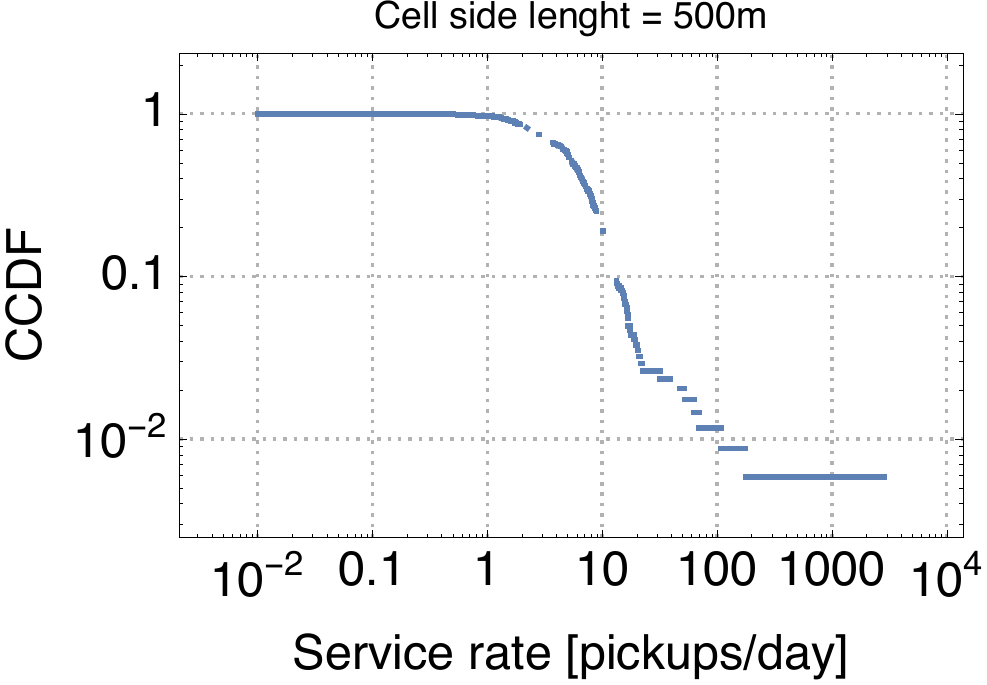}\vspace{5pt}
\includegraphics[scale=0.4]{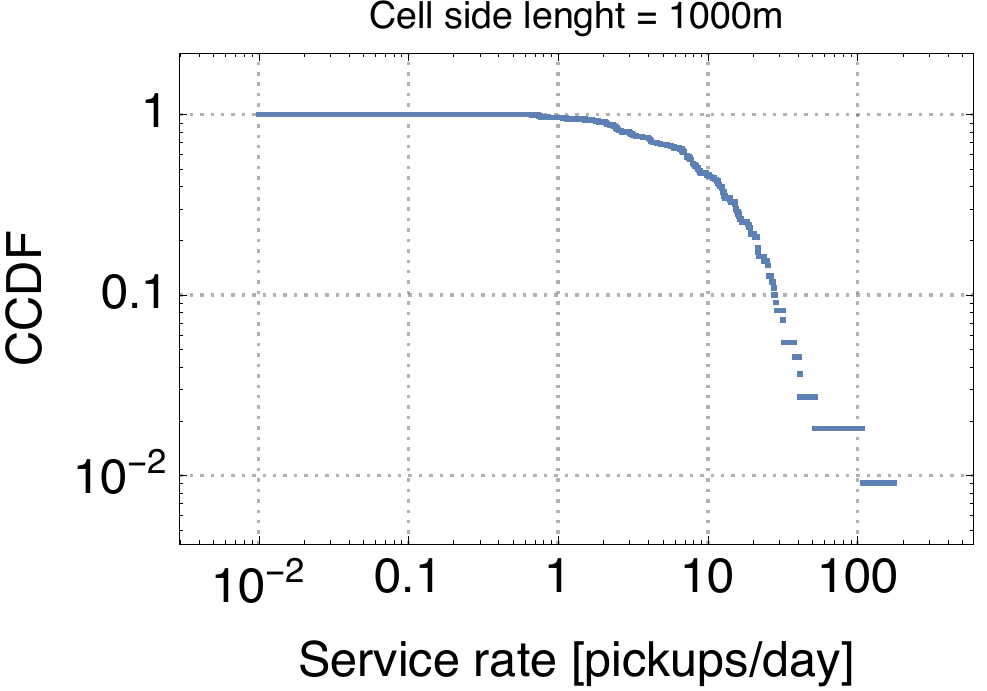}
\includegraphics[scale=0.4]{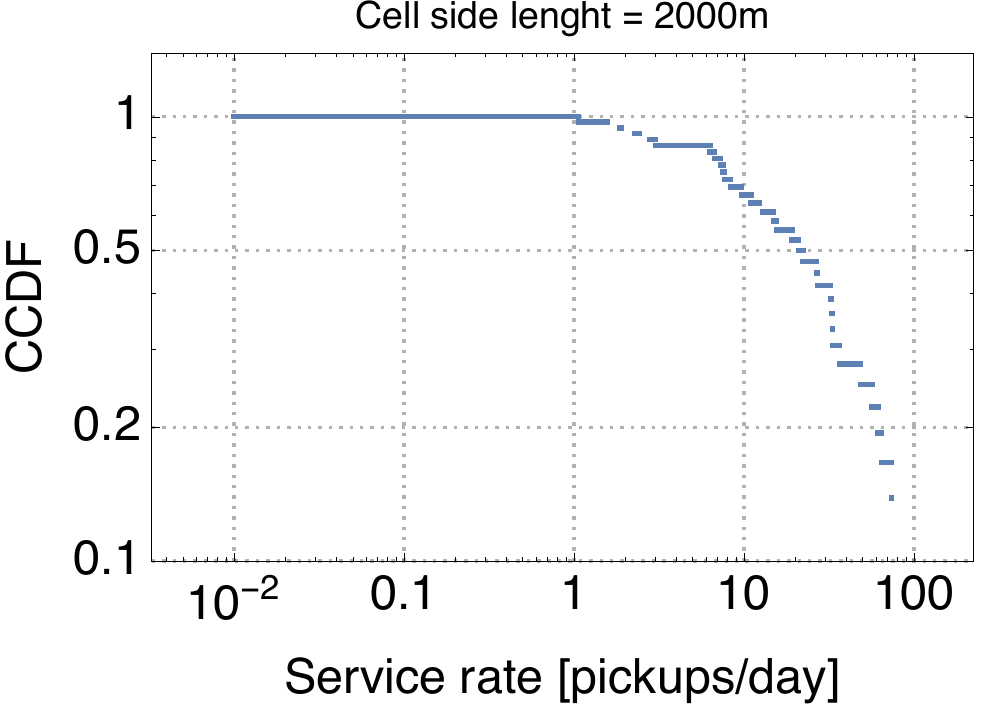}\vspace{-5pt}
\caption{CCDF of service rate measured in the trace}
\label{fig:mu_sidelength}
\end{center}\vspace{-25pt}
\end{figure}

We next feed the service rates $\mu_i$ to the closed network model that we have described in Section~\ref{sec:queueing_model}. We set the number $N$ of vehicles to $349$, as in our trace. The number of cells (i.e., the size of $\mathcal{S}$) is given by the number of active cells for each cell side length configuration. Then we derive the routing probabilities $p_{ij}$ by simply counting, for each service centre $i$, the fraction of trips from $i$ to any service centre $j$. For all the pairs of service centres for which the routing probability is non zero, we also compute the average duration $T_{ij}$ of trips from $i$ to $j$. The service rate of each server in the delay queue for $(i,j)$ is then given by the inverse of $T_{ij}$ (the MLE estimator of the rate of an exponential random variable is simply the inverse of the sample mean). 
For all cell side lengths, the resulting Markov chain is ergodic, hence a unique steady-state probability exists~\cite{bolch2006queueing}.

Since the closed queueing network is completely defined based on the car sharing trace, we can apply the formulas for the throughput, utilization, and average number of cars at the stations that we have derived in Section~\ref{sec:queueing_model}.
%
Due to the lack of space, we show only the results for cell side lengths 250m and 1000m (the results for the other cases are analogous). We observe that for the throughput and availability (Fig.~\ref{fig:throughput_sidelength} and~\ref{fig:availability_sidelength}) the predictions of the theoretical model are quite accurate, regardless of the size of the cell. In particular, it seems that predictions are only offset by a proportionality constant. A less accurate match is obtained for the average number of cars parked at the service centres when the utilisation of service centres is high, but the model still captures pretty closely the general trend of this metric (Fig.~\ref{fig:rhoVScarsparked_sidelength}). Thus, overall, we can conclude that queueing-theoretical approaches can be safely used for modelling CS systems.


\begin{figure}[t]
\begin{center}
\includegraphics[scale=0.4]{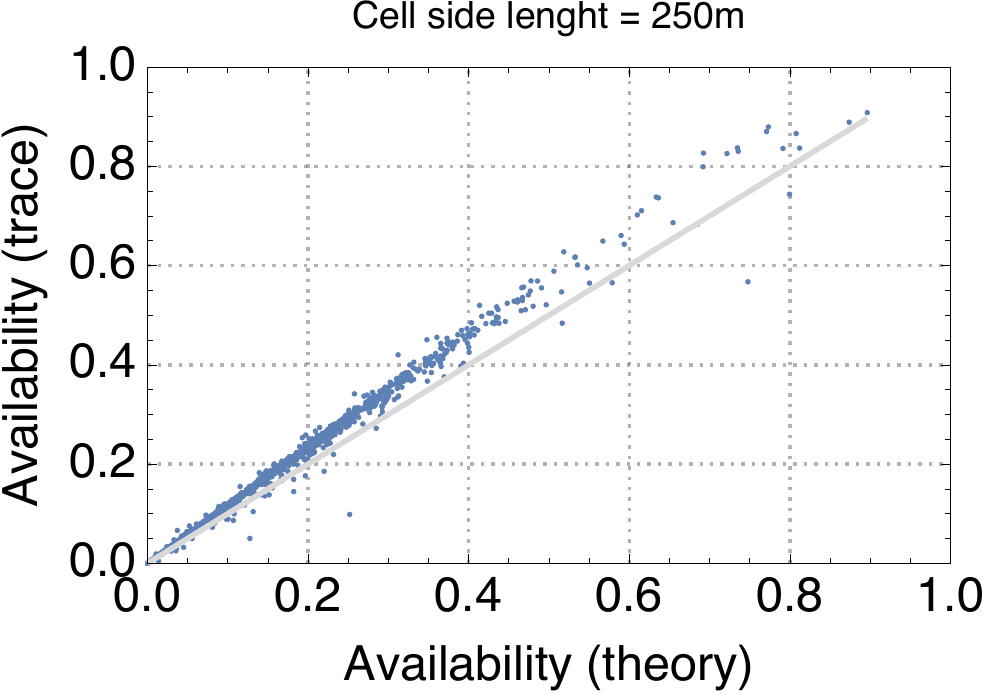}
\includegraphics[scale=0.4]{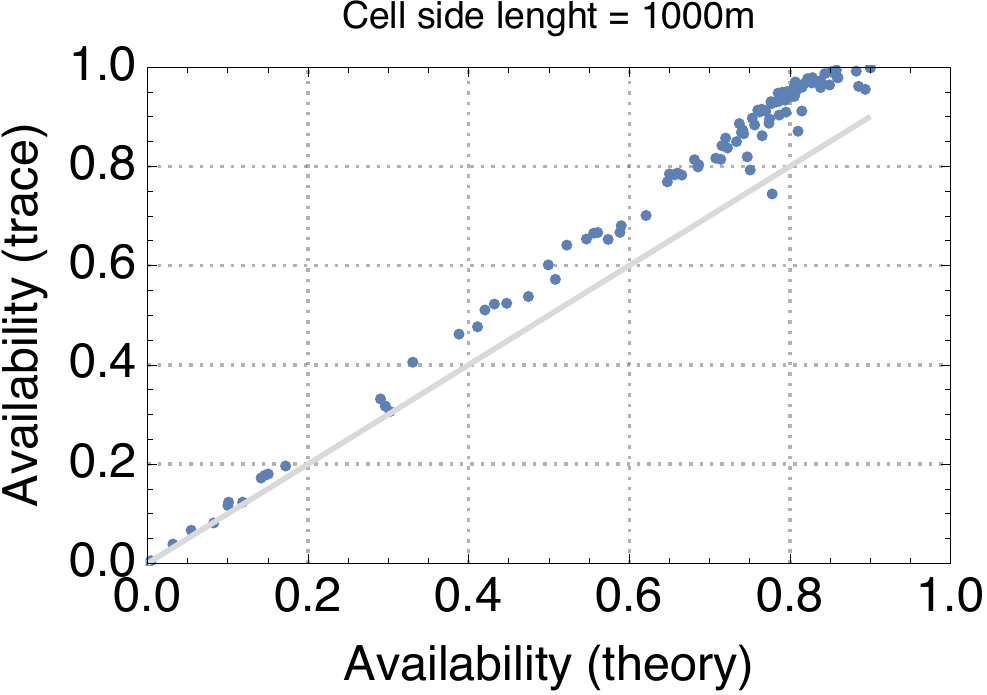} \vspace{-5pt}
\caption{Availability scatterplot (theory vs trace)}
\label{fig:availability_sidelength}
\end{center}\vspace{-10pt}
\end{figure}

 \begin{figure}[t]
\begin{center}
\includegraphics[scale=0.4]{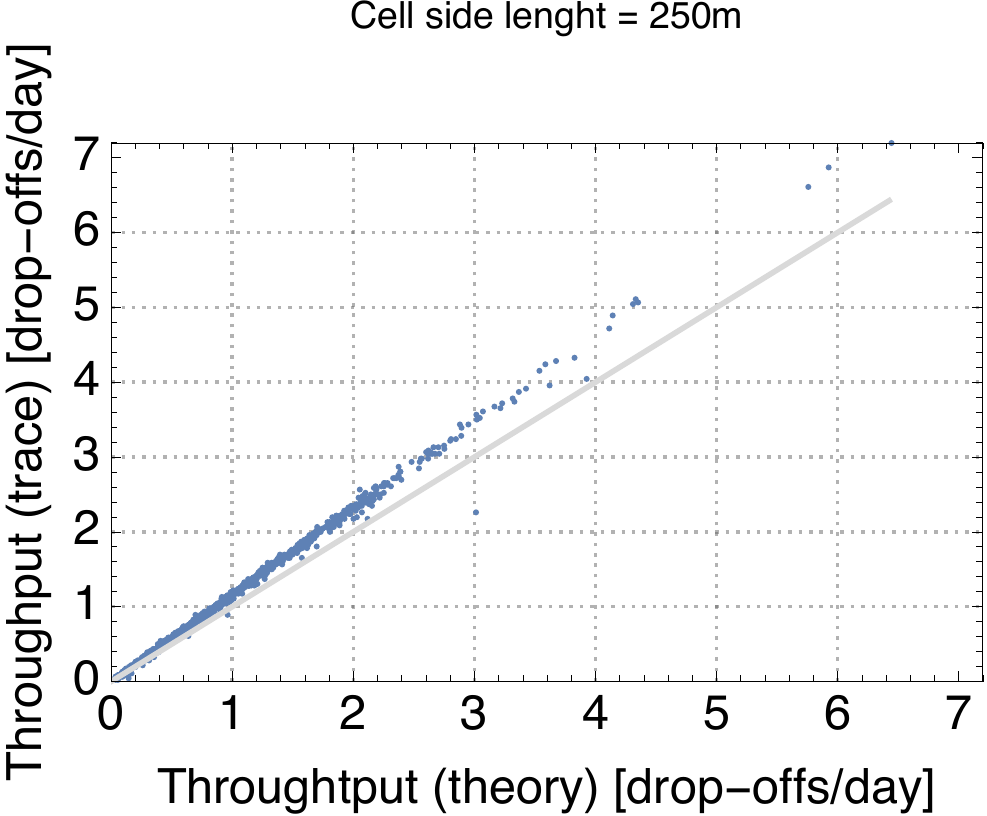}\hspace{5pt}
\includegraphics[scale=0.42]{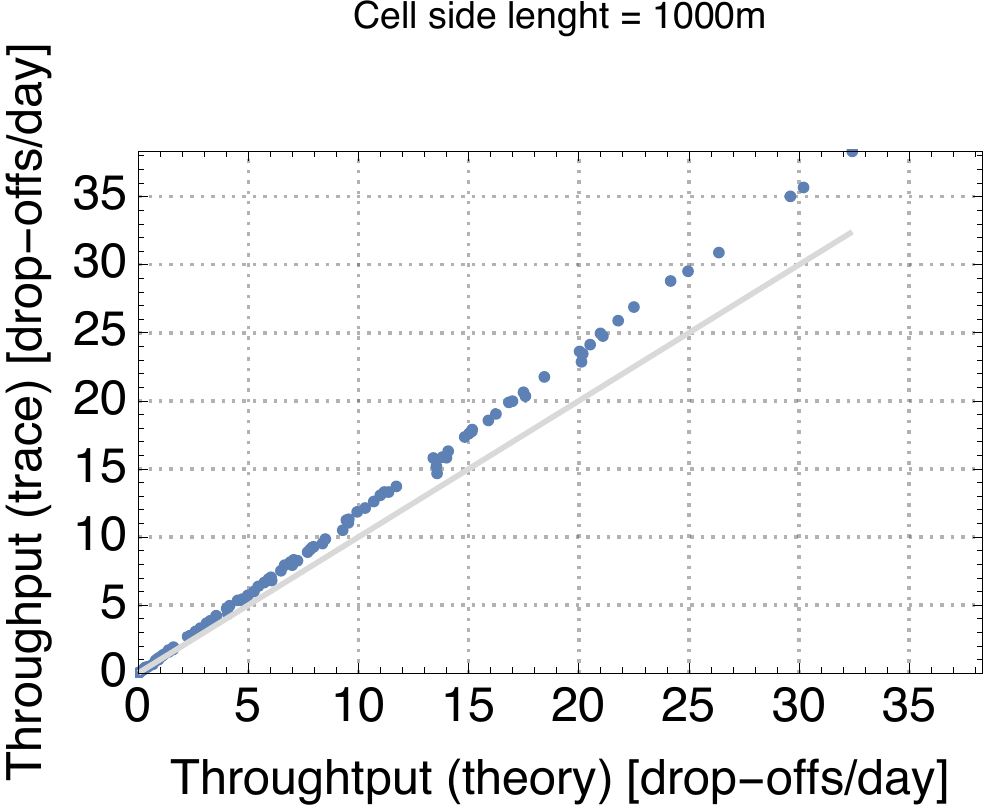} \vspace{-5pt}
\caption{Throughput scatterplot (theory vs trace)}
\label{fig:throughput_sidelength}
\end{center}\vspace{-15pt}
\end{figure}

%

 \begin{figure}[t]
\begin{center}
\includegraphics[scale=0.45]{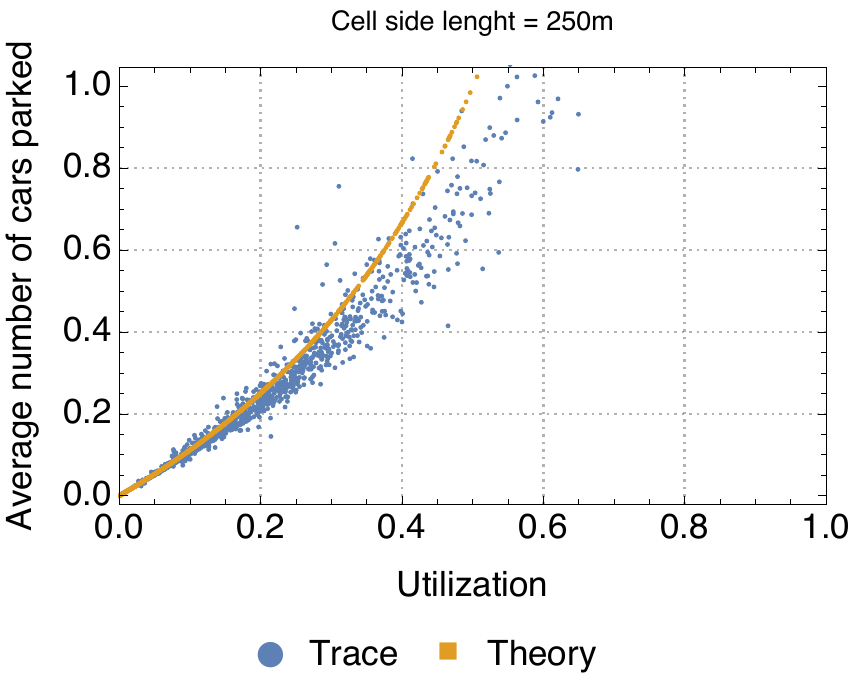}
\includegraphics[scale=0.45]{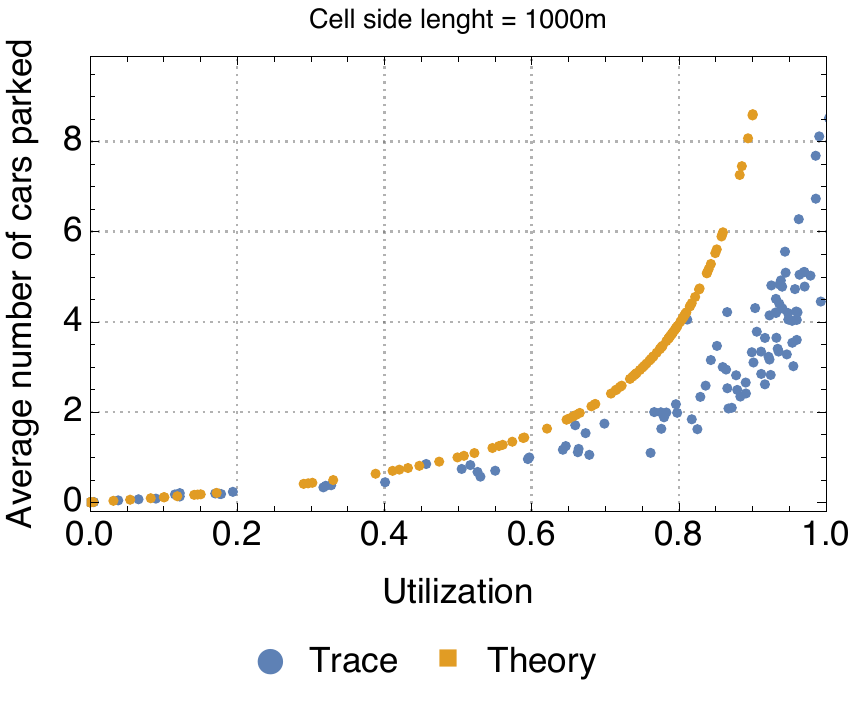}\vspace{-10pt}
\caption{Average number of cars available at stations as a function of the utilization of the station}
\label{fig:rhoVScarsparked_sidelength}\vspace{-10pt}
\end{center}
\end{figure}


%
%
\subsection{Impact of Fleet Size on CS Performance}\label{sec:fleet}
\noindent
It is well known that vehicle availability is influenced by the number, location and size of CS stations~\cite{itcs16_biondi}, as well as by the fleet size~\cite{smc14_fleet,Boyaci2015}. Typically, the optimal fleet size is chosen in such a way to optimise the trade-off between fleet costs and increased revenues due to higher availabilities. In this section, we want to demonstrate with a real-case study that increasing the fleet size, even in the ideal case of unconstrained capital investment, is not the panacea for ensuring higher vehicle availabilities. 
In the literature, George and Xia~\cite{or11_queue} have already shown that increasing the fleet size pumps up the availability at service centres but only until the centre(s) with the highest utilisation become saturated. This would not be a problem if all service centres were homogeneous, i.e., if they had comparable service rates. In that case, increasing the fleet size could bring the system to a situation of maximum availability at all stations. However, as Fig.~\ref{fig:mu_sidelength} shows, the service rates in a real car sharing system tend to be quite heterogeneous. 

In order to illustrate how to exploit the model in Section~\ref{sec:queueing_model} for ``what-if'' studies, in the following we use the set of stations and delay queues obtained in Section~\ref{sec:validation} for cell side length 250m and we test what would happen if the CS operator increased the fleet from 349 to 500 and 1000 shared vehicles, or if it reduced the fleet to 200 shared cars. Owing to the heterogeneity of service rates and the result by~\cite{or11_queue}, we do not expect a significant improvement in the availability of vehicles as the fleet size increases. This is confirmed by Fig.~\ref{fig:availabilityVSstat_fleetsize}, which shows the availability for different fleet sizes at the different stations. With 1000 shared vehicles, our reference car sharing system reaches maximum availability, but only for a very small subset of stations. All the others are left lagging behind. What is worse, those stations that see their availability increase significantly are those whose availability was already higher (in Fig.~\ref{fig:availabilityVSstat_fleetsize} stations are sorted by increasing availability when $N=200$ and the same order is kept for all other $N$ values). Thus, if we increase the fleet size we only observe the riches getting richer, with no redistribution effect in the network.
%
%
\begin{figure}[t]
\begin{center}
\includegraphics[scale=0.7]{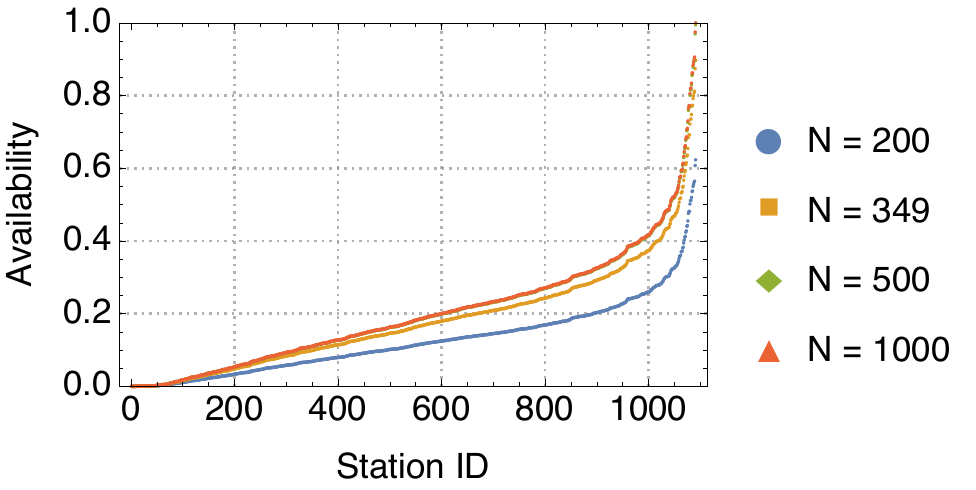}\vspace{-5pt}
\caption{Availability per station with varying fleet size (the curves for N=500 and N=1000 are overlapped).}
\label{fig:availabilityVSstat_fleetsize}
\end{center}\vspace{-20pt}
\end{figure}

The above discussion confirms the intuition that redistribution strategies altering the flows in the networks are needed for real-life car sharing systems. While there exists several proposals in the literature as far as traditional car sharing systems are concerned, there are no results for innovative car sharing systems with stackable vehicles as those described in Section~\ref{sec:intro}. In order to fill this gap, in Section~\ref{sec:relocation_queue} we discuss how to modify the single-server queues of the closed car sharing network to include these stackable capabilities, and how to exploit this new type of queue for setting up a theoretical model for the relocation with stackable vehicles. But before doing this, we make the case for user-based relocation with stackable cars by showing, in the following section, simulation results using two simple heuristics. 
%
%
%
%
%
\section{User-based Relocation: Preliminary Results}\label{sec:relocation_sim}
\noindent
User-based relocation policies are typically considered more convenient for the CS operator than operator-based ones as they do not require the use of dedicate workforce. However, users tend to move accordingly to the flows that are the cause of unbalance in the system in the first place. For this reason, most works on user-based relocation focus on finding the right incentives for users to slightly modify their behaviours in a way more favourable to the CS needs. However, it is still uncertain whether users would be willing to participate in a rebalancing program by accepting to drop off the vehicle at an alternative destination or to pick up a more distant vehicle~\cite{Herrmann2014}. 
One of the main advantages of using stackable cars is that we may not need to change customers' travel behaviours because we can amplify -- by asking the customer to drive a train for relocation -- the ``weak signal" of customers belonging to those flows that go in the right directions for relocation. 

In order to preliminarily assess the impact of vehicle stackability on the relocation performance, in the following we evaluate two simple approaches: $i)$ a \emph{uniform} relocation strategy in which each customer takes a second vehicle to his/her intended destination with a fixed probability $\alpha$; and $ii)$ a \emph{backpressure} strategy in which a customer takes a second vehicle to his/her intended destination only if the number of parked vehicles at the destination station is smaller than at the origin station\footnote{This strategy is inspired by the backpressure routing algorithm, a method for directing traffic around a queueing network that achieves maximum network throughput~\cite{tassiulas1992stability}.}. The rationale behind the latter strategy is to use the redistribution to equalise the queue backlogs (i.e., the number of cars at each station waiting for customers to pick them up). 

\begin{figure}[t]
\begin{center}
\includegraphics[scale=0.7]{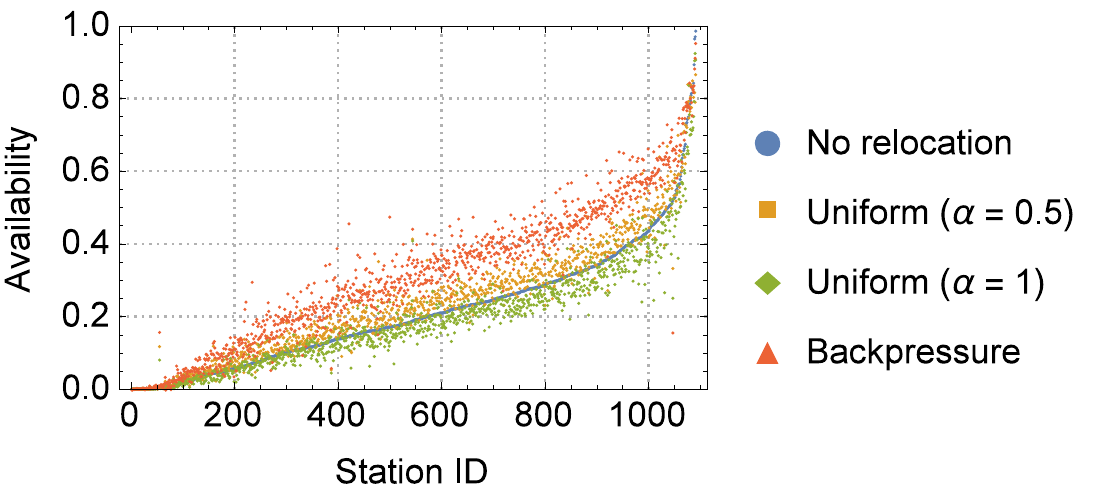}\vspace{-10pt}
\caption{Availability per station, with and without relocation.}
\label{fig:availability_relocation}
\end{center}\vspace{-10pt}
\end{figure}

We simulate the car sharing system using a custom C++ simulator that tracks the evolution of the closed queueing network system described in Section~\ref{sec:queueing_model}. For our evaluation, we use the same configuration as in Section~\ref{sec:fleet}, i.e., cell side length 250m and a large fleet size of 1000~vehicles. The transient period before the system reaches a steady-state is discarded. We compare the relocation strategies in terms of the availability of vehicles that they can provide to each station (Fig.~\ref{fig:availability_relocation}). Important observations can be derived from the results. First, relocation has a complex impact on the car availability. While increasing the fleet size produces a smoothed ``scaling'' effect on the car availability, relocating vehicles may cause stations with similar initial car availabilities to experience a different gain (corresponding to the dispersion of values in Fig.~\ref{fig:availability_relocation}). Second, uniform relocation is not bringing about any significant performance gain, with most of the stations even having a reduced availability. On the contrary, a backpressure relocation scheme is effective in improving car availabilities since it smartly relocates vehicles where there might be a shortage. Third, with a backpressure relocation strategy the performance gain is higher for stations with low-to-medium availabilities. To quantify this trend, in Fig.~\ref{fig:availability_backpressure_relocation} we show the availability variation (in percentage) for each station when using backpressure relocation. The results show that that are stations with very low availabilities that experience an availability increases up to $\sim 200\%$. However, the gains are highly variable and there are also stations that can suffer from degradation of car availability. This motivates the need for design and modelling tools that can allow to calculate optimal relocation probabilities between pairs of stations depending on their service characteristics.  

\begin{figure}[t]
\begin{center}
\includegraphics[scale=0.5]{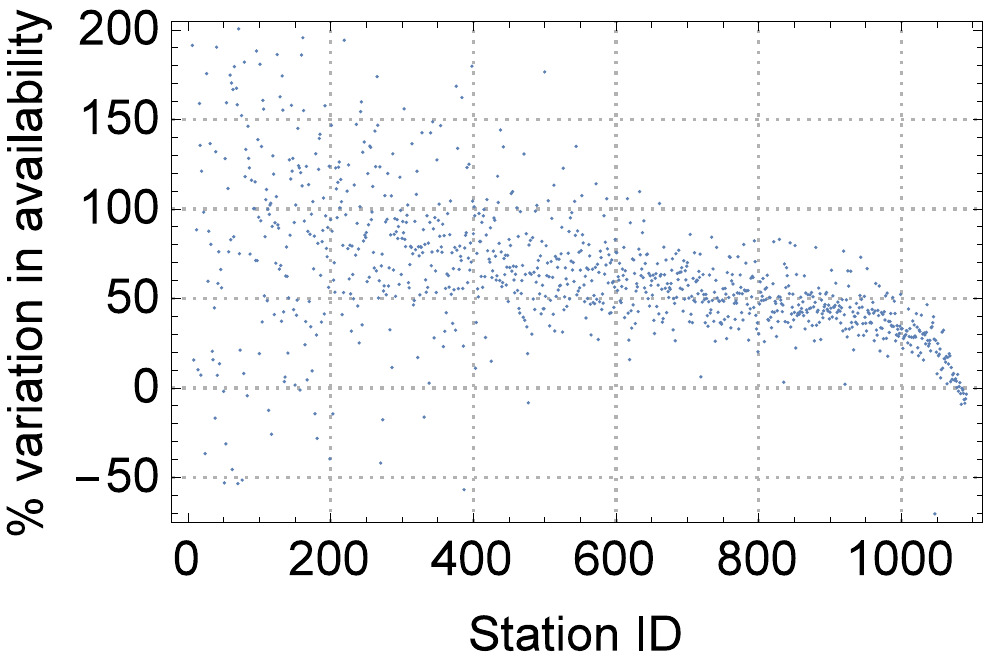}\vspace{-5pt}
\caption{Availability variation per station with backpressure relocation.}
\label{fig:availability_backpressure_relocation}
\end{center}\vspace{-15pt}
\end{figure}

%
%
%
\section{Queues and stackable vehicles}\label{sec:relocation_queue}
\noindent
In order to fill the gap between heuristic-based relocation and optimal relocation (i.e., a relocation strategy that explicitly maximizes some utility function of the CS system variables) we need to be able to represent mathematically the CS system. Queueing theory has been already explored in the literature to design smart relocation strategies for legacy car sharing~\cite{rr16_queue}. However, those mathematical models are not fit to capture car sharing systems that allow for stackable vehicles. Thus, in this section, we develop a new queueing model that addresses this issue. We call this new queue \emph{relocation queue}. A relocation queue can be used, with some modifications, for both user-based and operator-based relocation. Due to its potential impact on CS costs (customers are cheaper than a dedicated workforce) and its many challenges, in this work we focus on user-based relocation.

As a first step, we consider the queue in isolation. As with legacy car sharing systems, customers arrive at stations and pick up vehicles for their journeys. With stackable vehicles, however, a customer can pick up up to $n$ vehicles, in case a redistribution is needed by the CS operator. For the sake of example, in the following we set $n=2$. As highlighted in Section~\ref{sec:relocation_sim}, customers should not be always requested to perform relocation, though, because, e.g., if all customers headed towards service centre $j$ would always relocate vehicles, we might generate too large a flow towards station $j$, possibly emptying the origin station $i$. Hence, the relocation process from station $i$ to station $j$ is regulated by parameter $\alpha_{ij}$, which can be seen as the probability for customers headed for station $j$ from station $i$ to pick up an additional vehicle for relocation.

Let us now take a step back and rethink how the service rate has been modelled so far for single-server queues in Section~\ref{sec:queueing_model}. There, we have used a unique service rate $\mu_i$ for each station~$i$, regardless of the destination of customers picking up vehicles at station $i$. This approach is not suitable anymore, because, with the relocation queue, we want to distinguish between customers heading for different destinations, since they may operate redistribution with different probabilities $\alpha_{ij}$. Thus, we need to explode $\mu_i$ into its components $\mu_{ij}$, each describing the rate at which customers headed for station $j$ arrive at station $i$. Please note that, thanks to the properties of the exponential distribution, $\mu_i = \sum_j \mu_{ij}$. In fact, the interdeparture interval when the queue is busy is exponentially distributed, hence the arrival of customers headed towards different destinations can be handled as if it were a superposition of Poisson processes. 

%
%
%
\subsection{An approximation of the relocation queue\label{sec:relocation_queue_approx}}
\noindent
An exact representation of what happens at a relocation queue should rely on the concept of \emph{batch queue}~\cite{bolch2006queueing}. However, since dealing with batch queues can be complicated when linking together stations into a closed networking system, in this section we propose a modified relocation queue, based on two approximations that allow us to significantly reduce the complexity of the model. The first approximation relies on the intuition that, when customers relocate vehicles, it is like they were picking up vehicles twice as fast. More in detail, customers headed for station~$j$ seem to arrive twice as fast at station $i$ with probability~$\alpha_{ij}$. We can represent this ``modified'' service, say $S_{ij}$, with a mixture distribution, where random variable $S_{ij}$ is given by the following:
\begin{equation}\label{eq:s_ij}
S_{ij} = (1 - \alpha_{ij}) \textrm{Exp}\left(\mu_{ij}\right) + \alpha_{ij} \textrm{Exp}\left(2 \mu_{ij} \right).
\end{equation}
The above equation basically express the concept that, with relocation (i.e., with probability $\alpha_{ij}$) the service is twice as fast, while, without relocation (which happens with probability $1-\alpha_{ij}$), the service runs at its unmodified rate. Please note that the service described by  $S_{ij}$ only holds when there are at least 2 vehicles at station $i$. Otherwise, the customers pick up vehicles with their unmodified rate~$\mu_{ij}$ with probability $1$.

The distribution of $S_{ij}$ is not exponential but, in order to keep the analysis tractable, we want to approximate it with an exponential random variable. To this aim, we simply compute the expectation of $S_{ij}$ (equal to $\mathbb{E}[S_{ij}] = \frac{2 - \alpha_{ij}}{2 \mu_{ij}}$) and use its inverse as the rate $\gamma_{ij}$ of the approximating exponential random variable. This is the second approximation for the modified relocation queue.
Now that we have exponential service times for each destination $j$, we can again compute the overall service rate by summing the service rates for each destination. The modified relocation queue is illustrated in Fig.~\ref{fig:modified_reloc_queue}. This queue belongs to the category of \emph{load-dependent} queues, since the service rate is dependent on the current state of the queue.

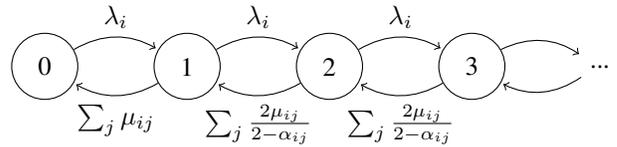
\begin{figure}
\begin{center}
\begin{tikzpicture}
        \node[state]             (0) {0};
        \node[state, right=of 0] (1) {1};
        \node[state, right=of 1] (2) {2};
        \node[state, right=of 2] (3) {3};
        \node[draw=none, right=of 3]           (4) {...};
 
        \draw[every loop]
            (0) edge[bend left, auto=left] node {$\lambda_i$} (1)
            (1) edge[bend left, auto=left] node {$\sum_j \mu_{ij}$} (0)
            (1) edge[bend left, auto=left] node {$\lambda_i$} (2)
            (2) edge[bend left, auto=left] node {$\sum_j \frac{2 \mu_{ij}}{2 - \alpha_{ij}}$} (1)
            (2) edge[bend left, auto=left] node {$\lambda_i$} (3)
            (3) edge[bend left, auto=left] node {$\sum_j \frac{2 \mu_{ij}}{2 - \alpha_{ij}}$} (2)
            (3) edge[bend left, auto=left] node {} (4)
            (4) edge[bend left, auto=left] node {} (3);
    \end{tikzpicture}
\end{center}\vspace{-10pt}
\caption{Markov chain for the modified relocation queue}
\label{fig:modified_reloc_queue}\vspace{-15pt}
\end{figure}

The stationary distribution of the modified relocation queue can be found by solving the balance equations 
using, e.g., the difference equation technique~\cite{medhi2002stochastic}. Due to space limitations we do not report the full mathematical derivations, which are tedious, but only the final results. Specifically, we find that the system has a solution only if $\zeta = \frac{\lambda_i }{\sum_j \frac{2 \mu_{ij}}{2-\alpha_{ij}}}$ is strictly smaller than $1$. This is the new equilibrium condition, which replaces $\frac{\lambda}{\mu_i} < 1$ for the M/M/1 queue. At equilibrium, the stationary distribution of the queue is then given by the following:
\begin{equation}\label{eq:approx_queue_stat_distr}
\begin{cases}
    \pi_0 = \frac{1-\zeta }{-\zeta +\rho +1} \\
    \pi_n = \pi_0 \; \rho \; \zeta^{n-1} & n \geq 2
\end{cases}
\end{equation}
where we have denoted with $\rho$ the quantity $\frac{\lambda}{\mu_i}$.


%
%
%
\subsection{Validation\label{sec:relocation_queue_approx_validation}}
\noindent
In order to validate the proposed approximate model for the relocation queue, in the following we compare important metrics, such as the utilization and the expected number of available cars obtained with this model against simulation results (thus, exact) of the relocation queue. In order to make the discussion easy to follow, we consider $\alpha_{ij} = \alpha_i, \forall j$ (basically, we assume that the probability of relocation is the same for all destinations).

Fig.~\ref{fig:utilization_approx_queue} shows the utilization for different $\alpha_i$ values. We observe that the biggest discrepancies appear for intermediate values of $\alpha_i$, but that, overall, the approximation is very close to the exact characterization of the relocation queue. In addition, the error is always greater for larger values of $\rho$. This is due to the fact that, when $\rho$ is small, the queue is in light traffic, with typically 0 or 1 cars, and thus relocation cannot be performed most of the times. Since the approximation applies to transitions between states with at least two cars, it has little effect in this situation. Vice versa, with larger $\rho$ the chances for relocation are higher, thus the approximation on the states with more than 2 vehicles starts to kick in. Please note, however, that the difference between the exact and approximate models is not much. Analogous conclusions can be drawn when looking at the difference between the predictions of the approximate model and simulations  as far as the average number of cars parked at the station is concerned (Fig.~\ref{fig:cars_parked_approx_queue}).

\begin{figure}[t]
\begin{center}
\includegraphics[scale=0.5]{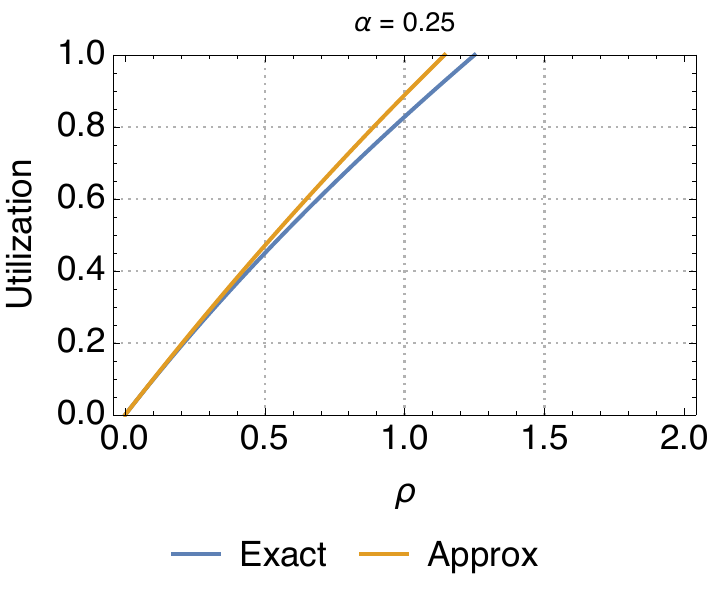}
\includegraphics[scale=0.5]{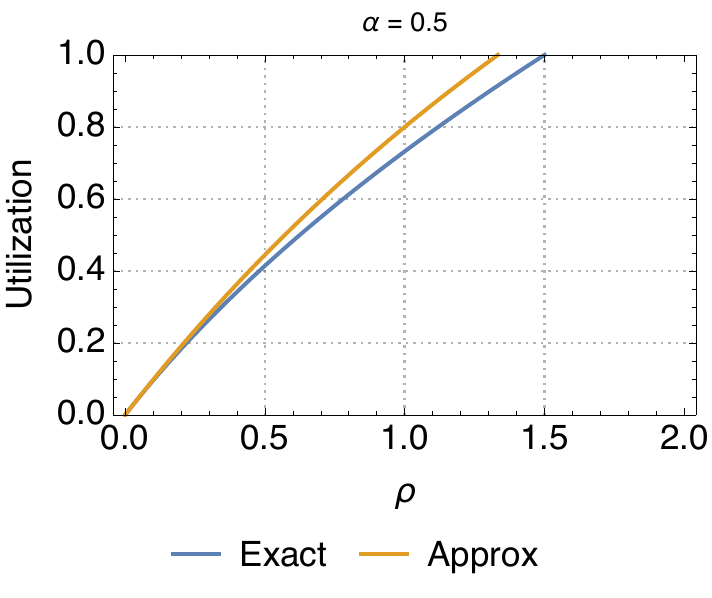}
\includegraphics[scale=0.5]{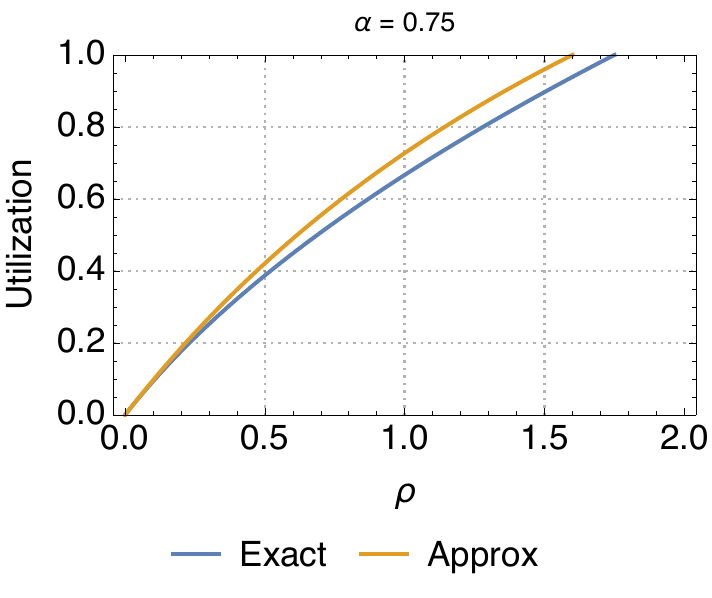}
\includegraphics[scale=0.5]{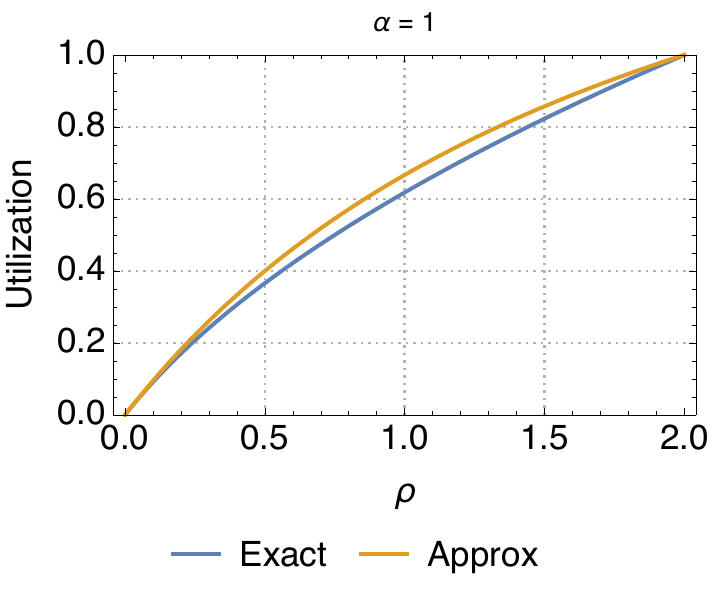}\vspace{-5pt}
\caption{Utilization: simulation VS approximate model}
\label{fig:utilization_approx_queue}
\end{center}\vspace{-15pt}
\end{figure}

\begin{figure}[t]
\begin{center}
\includegraphics[scale=0.5]{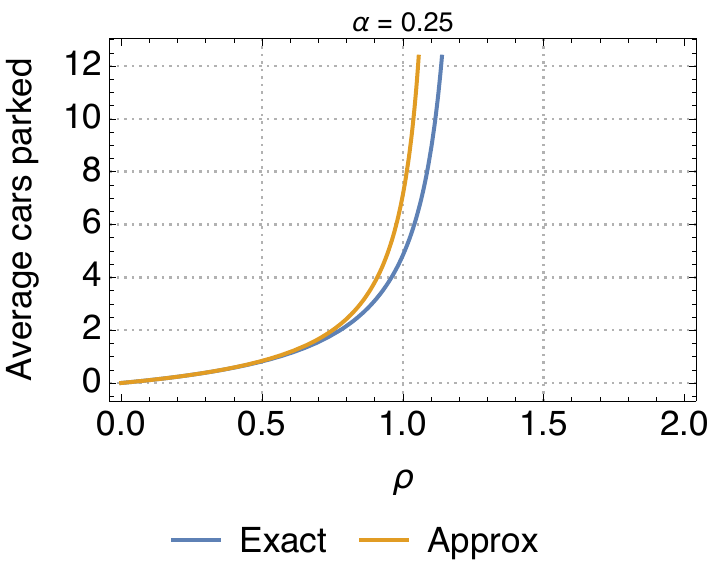}
\includegraphics[scale=0.5]{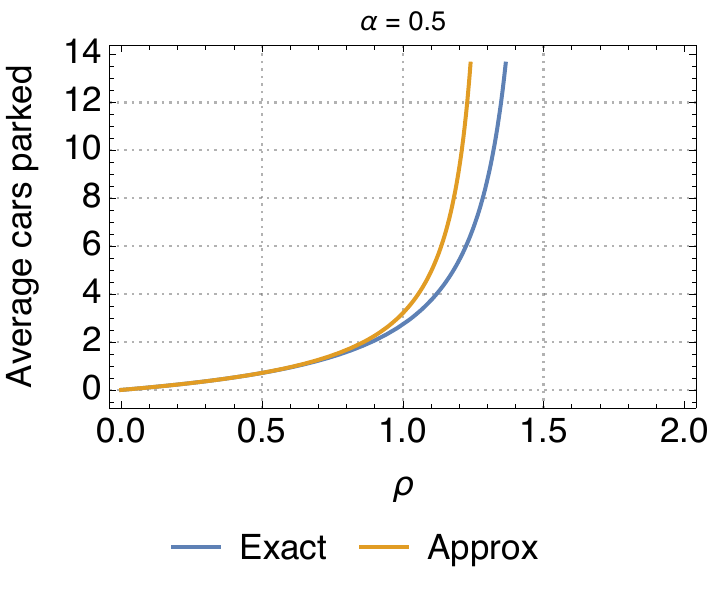}
\includegraphics[scale=0.5]{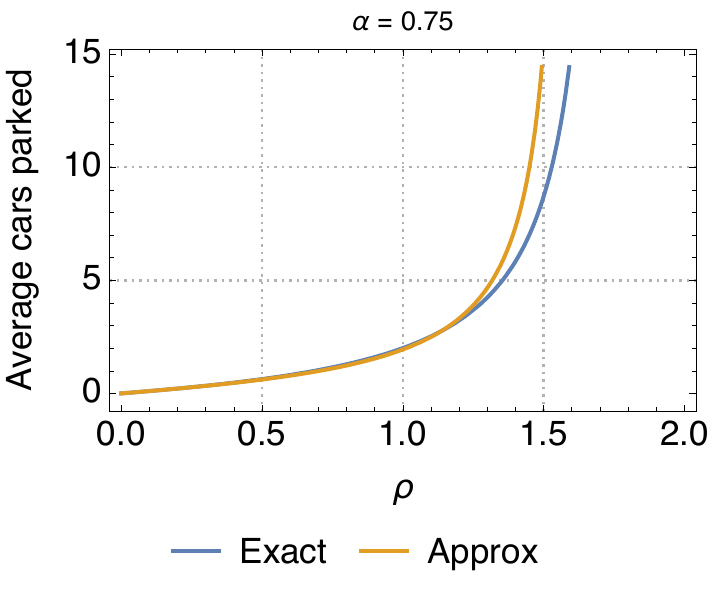}
\includegraphics[scale=0.5]{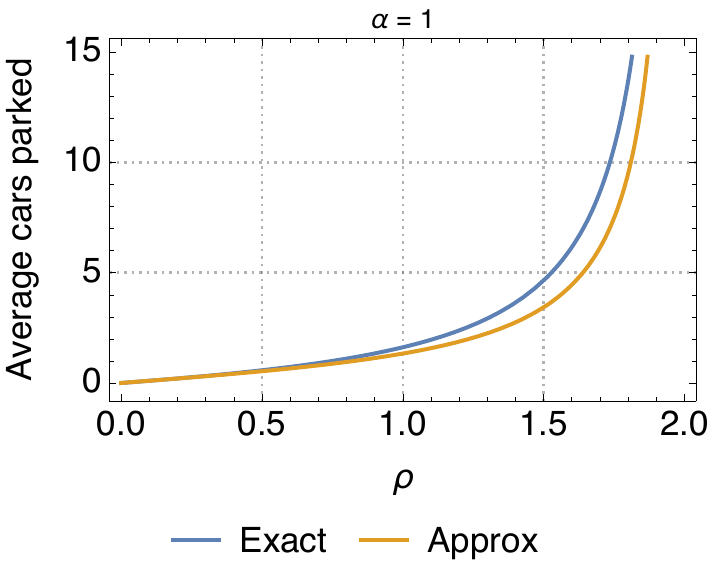}\vspace{-5pt}
\caption{Average number of cars parked at the station: simulation VS approximate model}
\label{fig:cars_parked_approx_queue}
\end{center}\vspace{-20pt}
\end{figure}

\subsection{Derivation of the routing probabilities}
\label{sec:relocation_queue_approx_routing}

So far we have considered the queue in isolation. In order to build a network of relocation queues we need to be able to compute the routing probability when vehicles leave a station $i$. Without relocation, the probability that  an idle vehicle $v$ is next picked up by a customer headed for station $j$ is clearly dependent on the arrival process of customers at station $i$. Since each arrival is exponentially distributed with rate $\mu_{ij}$, by the properties of the exponential distribution, the probability that vehicle~$v$ is picked up by a customer whose destination is station $j$ is equal to $\frac{\mu_{ij}}{\mu_i}$, where $\mu_i = \sum_j \mu_{ij}$.
Relocation effectively alters this probability, since some customers may pick up two vehicles instead of one. We derive the new routing probabilities in Lemma~\ref{lemma:routing} below.

\begin{lemma}\label{lemma:routing}
The probability that a vehicle leaving relocation queue $i$ goes to another queue $j$ is given by the following:
\vspace{-10pt}
\begin{equation}\label{eq:approx_queue_routing_prob}
p_{ij} = \frac{\mu_{ij}}{\mu_i} + \lambda_i \left( \frac{\gamma_{ij}}{\gamma_i^2}-\frac{\mu_{ij}}{\gamma_i \mu_i} \right),
\end{equation}
where for convenience of notation we have defined $\gamma _{ij} = \frac{2 \mu _{ij}}{2-\alpha _{ij}}$ (corresponding to the rate of $S_{ij}$ in Equation~\ref{eq:s_ij}) and $\gamma_i = \sum _{j} \frac{2 \mu_{ij}}{2-\alpha_{ij}}$.
\end{lemma}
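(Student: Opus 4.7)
The plan is to condition on the state of the relocation queue at the pickup epoch. In state $n=1$ only the unmodified arrival streams are active, while in states $n\geq 2$ the modified per-destination rates $\gamma_{ij}$ from Section~\ref{sec:relocation_queue_approx} apply. Within each of these two regimes the routing probability is fixed by a competing-exponentials argument, and the two regimes are then mixed using the stationary distribution from Equation~\ref{eq:approx_queue_stat_distr}, restricted to the non-empty states since no vehicle can leave an empty queue.

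Within each regime, the routing probability follows at once from the memoryless property and the independence of the per-destination arrival processes: in state $n=1$ the single parked car is next picked up by a customer bound for $j$ with probability $\mu_{ij}/\mu_i$; in states $n\geq 2$, with probability $\gamma_{ij}/\gamma_i$. The central calculation is therefore to obtain the conditional weights $P(n=1\mid n\geq 1)$ and $P(n\geq 2\mid n\geq 1)$. Equation~\ref{eq:approx_queue_stat_distr} displays $\pi_0$ and $\pi_n$ for $n\geq 2$, but $\pi_1$ must be recovered from the boundary balance equation $\pi_0\lambda_i = \pi_1\mu_i$, which gives $\pi_1 = \pi_0\rho$. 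Plugging in $\pi_0 = (1-\zeta)/(1-\zeta+\rho)$ and summing the geometric tail $\sum_{n\geq 2}\pi_n$, one verifies that the factors of $\rho$ and $(1-\zeta+\rho)$ cancel, leaving $P(n=1\mid n\geq 1) = 1-\zeta$ and $P(n\geq 2\mid n\geq 1) = \zeta$, where $\zeta = \lambda_i/\gamma_i$.

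Combining the within-regime probabilities with these weights gives the mixture $p_{ij} = (1-\zeta)\mu_{ij}/\mu_i + \zeta\gamma_{ij}/\gamma_i$, and substituting $\zeta = \lambda_i/\gamma_i$ followed by a one-line expansion reproduces the statement of the lemma; a quick sanity check is that $\sum_j p_{ij} = (1-\zeta) + \zeta = 1$, and that for $\alpha_{ij}=0$ (no relocation) the correction collapses, recovering the standard $\mu_{ij}/\mu_i$. The main obstacle is essentially bookkeeping: Equation~\ref{eq:approx_queue_stat_distr} hides $\pi_1$, so one must detour through the boundary balance before the cancellation producing the clean $(1-\zeta,\zeta)$ weighting becomes visible; everything else is a single-line competing-exponentials argument per regime.
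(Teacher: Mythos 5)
Your proposal is correct and follows essentially the same route as the paper: condition on the queue being busy, use competing exponentials to get $\mu_{ij}/\mu_i$ in state $n=1$ and $\gamma_{ij}/\gamma_i$ in states $n\geq 2$, and mix with the weights $\xi_1=1-\zeta$ and $1-\xi_1=\zeta$ obtained from the rescaled stationary distribution. The only difference is that you make explicit the recovery of $\pi_1=\pi_0\rho$ from the boundary balance equation, a step the paper leaves implicit.
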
 

\begin{proof}
First of all, since we are only interested in studying what happens when the server is busy, we rescale the stationary probability excluding case $n=0$. We obtain $\xi _n=\frac{\pi _n}{1-\pi _0}$. Now the computation of the routing probability $p_{ij}$ is straightforward: destination $j$ is selected for vehicle $v$ with the probability that a  customer headed for station $j$ arrives before the others, i.e.,  $\frac{\gamma_{ij}}{\gamma_i}$ when there are $n \geq 2$ jobs in the system (which happens with probability $1 - \xi_1$), $\frac{\mu_{ij}}{\mu_i}$ otherwise. We can write the above as $p_{ij}=\left(1-\xi _1\right) \frac{\gamma_{ij}}{\gamma _i} + \xi_1 \frac{\mu_{ij}}{\mu_i}$. By simply substituting Equation~\ref{eq:approx_queue_stat_distr} in the expression for $\xi_1$, we can express $p_{ij}$ as in Equation~\ref{eq:approx_queue_routing_prob}.
\end{proof}

In practice, Lemma~\ref{lemma:routing} tells us that we can increment the baseline routing probability ($\frac{\mu_{ij}}{\mu_i}$, the one without relocation) by a quantity $\lambda_i \left( \frac{\gamma_{ij}}{\gamma_i^2}-\frac{\mu_{ij}}{\gamma_i \mu_i} \right)$. The term within the parenthesis depends only on the customers' arrival processes and the configured relocation probability. Term $\lambda_i$ corresponds to the input traffic. The higher the input traffic, the higher the impact of a relocation policy at a station. Vice versa, if $\lambda_i$ is small, even with $\alpha_{ij}=1$ we cannot increase significantly the traffic towards station $j$. 
Another interesting finding from Equation~\ref{eq:approx_queue_routing_prob} is that the terms inside the parenthesis becomes zero when $\alpha_{ij} = \alpha_i$. So, when the relocation probability is set to the same value for all destinations, it has no effect on the routing probability. This means that relocation should always favour one destination over the others, in order to make an impact on the system. Note that this analytical results confirm what observed in Fig.~\ref{fig:availability_relocation} for the uniform relocation scheme. 

But how big an impact user-based relocation can have on the system? For the sake of example, in the following we consider a relocation queue $i$ with two destinations. The customers headed for destination 1 are slower than those headed for destination 2, or, in other words, they arrive at a lower rate. We set $\mu_{i1} = 0.2$ and $\mu_{i2} = 0.8$. Then, we plot how the routing probability varies when we change the relocation probability. In Fig.~\ref{fig:routing_prob_alpha1} we assume we want to relocate towards destination 1 (the slow one), and we vary the relocation probability $\alpha_{i1}$ in $\{0, 0.4, 0.8, 1\}$. In Fig.~\ref{fig:routing_prob_alpha2} we assume we want to relocate towards destination 2 (the fast one), and we vary the relocation probability $\alpha_{i2}$ again in $\{0, 0.4, 0.8, 1\}$. We observe that when an heavy flow of customers tries to take vehicles for relocation from a small flow, the resulting routing probability is not altered significantly (Fig.~\ref{fig:routing_prob_alpha2}). The greatest effect that relocation can have is reached when the opposite happens, i.e., when a small flow competes with a heavier flow (Fig.~\ref{fig:routing_prob_alpha1}). This feature should be taken into account when designing user-based relocation strategies with stackable vehicles. Using the formulas derived in this section, we can even obtain a stronger result, summarised in Theorem~\ref{theo:bound_relocation}.

\begin{figure}[t]
\begin{center}
\includegraphics[scale=0.7]{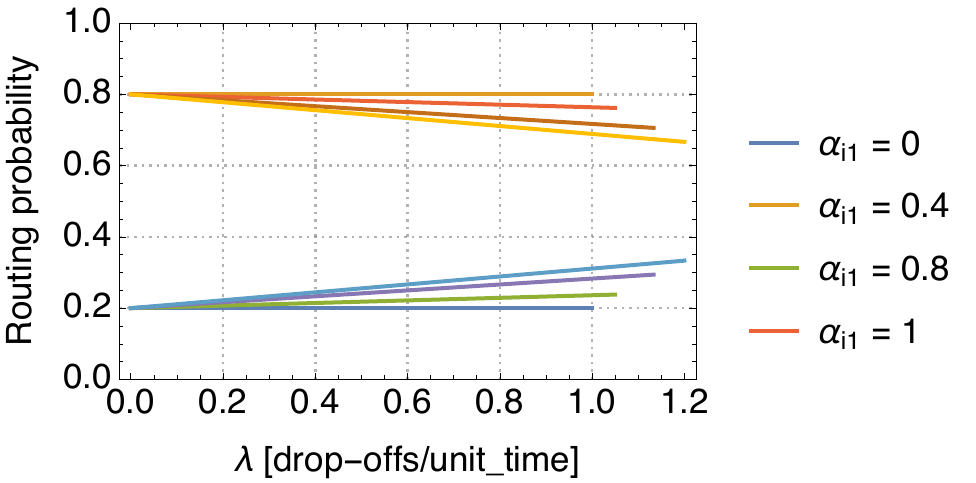}\vspace{-5pt}
\caption{Routing probability: relocation to the station with the slowest customers.}
\label{fig:routing_prob_alpha1}
\end{center}\vspace{-20pt}
\end{figure}

\begin{figure}[t]
\begin{center}
\includegraphics[scale=0.7]{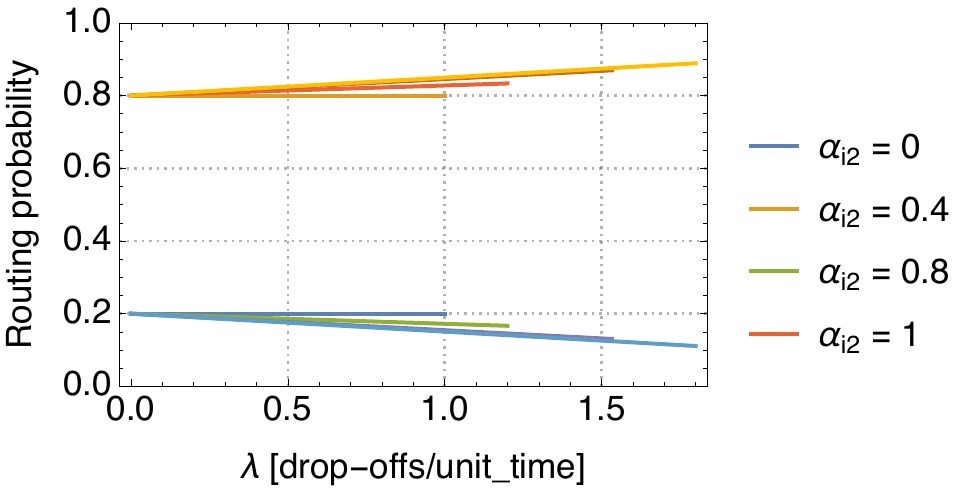}\vspace{-5pt}
\caption{Routing probability: relocation to the station with the fastest customers.}
\label{fig:routing_prob_alpha2}\vspace{-20pt}
\end{center}
\end{figure}

\begin{theorem}[Upper-bound on user-based relocation]\label{theo:bound_relocation}
User-based relocation can increase the routing probability from station $i$ towards another station $j$ (for which the initial $p_{ij}$ was greater than 0) by at most $3 - 2 \sqrt{2}$ ($\sim 0.17$). Hence, the flow of vehicles from $i$ to $j$ can never grow more than $17\%$.
\end{theorem}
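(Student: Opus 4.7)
The plan is to frame the claim as a constrained supremum problem. Combining Lemma~\ref{lemma:routing} with a small manipulation and introducing the auxiliary variables $b_k := 2/(2-\alpha_{ik}) \in [1,2]$, the increment over the no-relocation baseline can be written as
$$\Delta p_{ij} \;:=\; p_{ij} - \frac{\mu_{ij}}{\mu_i} \;=\; \frac{\lambda_i\,\mu_{ij}\,(b_j\mu_i - \gamma_i)}{\gamma_i^2\,\mu_i}, \qquad \gamma_i \;=\; \sum_k \mu_{ik} b_k.$$
The control is then the vector $(b_k)_k \in [1,2]^{|\mathcal{S}|}$, subject only to the stability condition $\lambda_i < \gamma_i$ inherited from Section~\ref{sec:relocation_queue_approx}, so the statement becomes a supremum of a rational function over a box.

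The first key step is to show that the supremum is approached at the corner where $b_k = 1$ for every $k \ne j$ and $b_j = 2$; in words, the operator relocates to $j$ with probability one and never relocates towards any other destination. Differentiating $\Delta p_{ij}$ in $b_k$ for $k \ne j$ yields a factor of sign $-(2b_j\mu_i - \gamma_i)$, which is non-positive since $\gamma_i \le 2\mu_i \le 2b_j\mu_i$; hence every such $b_k$ is driven down to $1$. With those fixed, differentiating in $b_j$ produces a sign equal to that of $\mu_i - (b_j-1)\mu_{ij}$, non-negative on $[1,2]$ because $\mu_{ij}\le\mu_i$, so $b_j = 2$ is optimal.

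Substituting these corner values gives $\gamma_i = \mu_i + \mu_{ij}$, and with the normalisations $r := \mu_{ij}/\mu_i \in (0,1)$ and $\ell := \lambda_i/\mu_i$ the increment collapses to
$$\Delta p_{ij} \;=\; \frac{\ell\, r(1-r)}{(1+r)^2}.$$
Stability forces $\ell < 1+r$, so supremising in $\ell$ reduces the problem to the one-variable optimisation $\sup_{r\in(0,1)} r(1-r)/(1+r)$. The first-order condition gives $r^2 + 2r - 1 = 0$, whose positive root is $r^\ast = \sqrt{2}-1$; plugging back yields $(\sqrt{2}-1)^2 = 3 - 2\sqrt{2}$. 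The bound is a strict supremum because equality in stability is forbidden.

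I expect the main obstacle to be the multivariate corner argument: $\Delta p_{ij}$ is a rational function of coupled variables, so one must check that the partial derivatives keep constant sign throughout the whole admissible box rather than only near a candidate optimum. The inequality $\gamma_i \le 2\mu_i$, together with the pointwise bound $\mu_{ij}\le\mu_i$, is precisely what makes both monotonicity statements hold globally. Once this is settled, reducing to one variable and solving a quadratic are routine.
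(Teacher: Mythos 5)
Your route is essentially the paper's: reduce to the ``all relocation effort on a single destination'' corner ($\alpha_{ij}=1$, $\alpha_{ik}=0$ for $k\neq j$, i.e.\ $\gamma_i=\mu_i+\mu_{ij}$), then a one-variable optimisation of $r(1-r)/(1+r)$ whose stationary point $r^\ast=\sqrt{2}-1$ gives $3-2\sqrt{2}$. Your algebra at the corner and the final quadratic agree with the paper's. In fact you go further than the paper on the one step it waves away (``it is easy to show that the maximum improvement is obtained when all relocation efforts are made towards a single station''): your sign computations for $\partial\Delta p_{ij}/\partial b_k$ and $\partial\Delta p_{ij}/\partial b_j$ are correct and do establish global monotonicity on the box, which is a genuine improvement in rigour.

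There is, however, one ordering flaw in how you combine the corner reduction with the stability constraint. You push $b$ to the corner \emph{with $\lambda_i$ held fixed}, and only afterwards invoke ``stability forces $\ell<1+r$.'' But $\ell<1+r$ is the stability condition \emph{at the corner}; the original configuration only guarantees $\lambda_i<\gamma_i(b)$, and $\gamma_i(b)$ can exceed $\mu_i(1+r)$ (up to $2\mu_i$), so the point you land on after the monotone deformation may be infeasible and its $\ell$ need not satisfy $\ell<1+r$. Followed literally, your chain only yields $\sup_{\ell<2,\,r}\ \ell r(1-r)/(1+r)^2=1/4$, not $3-2\sqrt{2}$. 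The fix is one line and is exactly the paper's order of operations: write $\Delta p_{ij}=\frac{\lambda_i}{\gamma_i}\cdot F(b)$ with $F(b)=\frac{\mu_{ij}(b_j\mu_i-\gamma_i)}{\gamma_i\,\mu_i}$, discharge the stability condition first via $\lambda_i/\gamma_i<1$, and then maximise $F$ over the box. Since $F$ is decreasing in $\gamma_i$ for fixed $b_j$ and increasing in $b_j$ at the resulting edge, your corner argument transfers verbatim and lands on $\sup_r r(1-r)/(1+r)=3-2\sqrt{2}$ as a strict (unattained) supremum.
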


\begin{proof}
It is easy to show that the maximum improvement in the routing probability is obtained when all relocation efforts are made towards a single station. Let us call this station station $z$. We have $\alpha_{iz} = 1$, $\alpha_{ij} = 0, \forall j \neq i,z$. When this holds, we have $\gamma_i = \mu_i + \mu_{iz}$.  We can then rewrite Equation~\ref{eq:approx_queue_routing_prob} as follows:
\begin{equation}
p_{iz} = \frac{\mu_{iz}}{\mu_i} + \frac{\lambda_i}{\gamma_i} \left( \frac{2 \mu_{iz}}{\mu_i + \mu_{iz}}-\frac{\mu_{iz}}{\mu_i} \right).
\end{equation}
Quantity $\frac{\lambda_i}{\gamma_i}$ needs to be smaller than $1$ for the queue to be stable. Forcing the constraint that $\mu_{iz}$ and $\mu_{i}$ should be strictly greater than zero, the maximum of the function within the parenthesis yields the result $3 - 2 \sqrt{2}$. Since the output flow from station $i$ to station $z$ at equilibrium is simply $\lambda_i p_{iz}$, we have that the flow of vehicles from $i$ to $j$ can be at most increased by $17\%$.
\end{proof}

Clearly, since the incoming flow to station $j$ is the composition of all the flows arriving at $j$, the overall impact (at $j$) of relocation can be much larger than $17\%$. What the bound tells us is that each incoming flow at $j$ can grow at most by $17\%$ with user-based relocation. In order to understand under which circumstances this bound is attainable, the next step in this research will focus on the study of a closed network of \emph{relocation} queues.

%
%
%
\section{Conclusions}\label{sec:conclusions}
\noindent
In this work we have considered the problem of relocating shared vehicles in a car sharing system in which cars can be driven in a train. Building upon the related literature, we have recalled the commonly used closed queueing network model for the characterisation of legacy car sharing systems. Next, we have validated this model against a real car sharing trace. To the best of our knowledge, this is the first trace-based validation of queueing theoretical models for car sharing systems. Our validation has shown that these models are extremely promising for modelling CS systems, as they are able to predict their behaviour quite accurately. Then, we have made the case for user-based relocation showing how simple strategies can be enough to improve car availabilities. Finally, we have proposed a new type of queue, to be used for modelling a station in which vehicles can be driven in a train. We have validated this model, showing its accuracy, and we have used it for deriving an upper-bound on the relocation performance: even in the best case, the flow of vehicles from a station $i$ to another station  $j$ can never grow more than $17\%$. 

As future work we plan to investigate the behaviour of a network of relocation queues, and to exploit our queueing theoretical framework to derive optimal relocation policies. Furthermore, our model could be extended to also characterise operator-based relocation strategies in which a dedicated workforce is used to relocate train of vehicles, which we believe to be of both theoretical and practical importance.

\vspace{-10pt}

%
%
%
%
%
%
%
%
%
%


\bibliographystyle{IEEEtran}

\bibliography{carsharing.bib}

\end{document}